\newtheorem{prop}{Proposition}
\newtheorem{definition}{Definition}
\newtheorem{proof}{Proof}
\title{Resilience of the autocatalytic feedback loop for gene regulation}
\author[1]{Daniele Proverbio}
\author[1]{Giulia Giordano}
\affil[1]{\small Department of Industrial Engineering, University of Trento, Trento 38123, IT }
\begin{document}

\maketitle

\begin{abstract}
Gene expression in response to stimuli is regulated by transcription factors (TFs) through feedback loop motifs, aimed at maintaining the desired TF concentration despite uncertainties and perturbations. In this work, we consider a stochastic model of the positive gene autoregulating feedback loop and we probabilistically quantify its resilience, \textit{i.e.}, its ability to preserve the equilibrium associated with a prescribed concentration of TFs, and the corresponding basin of attraction, in the presence of noise. We show that the formation of larger oligomers, corresponding to larger Hill coefficients of the regulation function, and thus to sharper non-linearities, improves the system resilience, even close to critical concentrations of TFs. We also explore a complementary definition of resilience that can be assessed within a stochastic formulation relying on the Fokker-Planck equation. Our formal results are accompanied by numerical simulations.
\end{abstract}

\section{Introduction and Motivation}\label{sec:Intro}


Regulating gene expression in response to stimuli is essential for protein production and cell survival. 
Assessing the \textit{resilience} of regulation pathways, \textit{i.e.}, their ability to withstand random perturbations and preserve crucial cell activities to survive and thrive, sheds light onto key biological mechanisms that sustain life and evolution, and enables the design of biomolecular circuits in synthetic biology.

Gene expression pathways, which can be successfully modelled using network motifs and control loops among responsive elements \cite{alon2019introduction,milo2002network}, are fundamental to process external and internal stimuli and accordingly regulate crucial cell functions, such as enzymatic activity \cite{ferrell2013feedback} or changes in gene expression during cell fate decisions \cite{huang2007bifurcation}; they are also essential building blocks for circuits in synthetic biology. 

The positive autoregulating feedback loop motif, where a transcription factor (TF) increases its own production, is a simple and powerful model of gene expression regulation that captures essential mechanisms observed in living cells \cite{wu2009origin} to guarantee robust adaptation and homeostasis \cite{drengstig2012robust}.
The model can exhibit multistability \cite{angeli2004detection}: admitting multiple stable equilibria is evolutionarily advantageous, as it allows to rapidly switch between concentrations of TFs, thereby responding rapidly and reliably to changes in environmental or physiological conditions. For instance, the system modelling the expression of $\beta$\textit{-galactosidase} in \textit{E. coli} is bistable \cite{ozbudak2004multistability}: when lactose becomes available in the environment, the response pathway drives the concentration of the \textit{lac operon} genetic inducer up to a critical threshold, yielding a sudden transition from low (``off'' state, which saves energy) to high (``on'') concentrations of the $\beta$\textit{-galactosidase} enzyme, which helps metabolise the nutrient and allows to feast on lactose before other bacteria, thus enabling survival. Then, as soon as nutrients deplete, another transition brings the system back to the ``off'' state, to avoid wasting resources. 
To efficiently manage gene translation and exploit nutrients to the fullest, transitions between stable equilibria must be tightly controlled and little sensitive to random deviations, environmental disturbances and intrinsic noise \cite{lestas2008noise} affecting cells. Understanding the mechanisms that improve the system resilience in relation to its ``off'' and ``on'' states helps unveil how cells thrive in uncertain environments and inform the development of synthetic pathways.

We consider a model for positive autoregulating feedback control of transcription and we investigate its resilience, formally defined as a probabilistic quantification of its ability to preserve a prescribed attractor and its corresponding basin of attraction in spite of stochastic noise \cite{MTNS2024}. In particular, we assess the equilibria of the nominal deterministic system and their stability properties: the system is bistable for suitable parameter ranges and undergoes fold bifurcations. Then, we employ a stochastic approach to quantify its resilience, around its asymptotically stable equilibria, in the presence of additive Gaussian white noise. Our analysis identifies factors that help the system cope with stochastic perturbations and avoid undesired random transitions between stable equilibria, thus approximately maintaining the desired TF concentrations, and illustrates how cells can thrive in noisy conditions.


\begin{figure}
  \begin{minipage}[c]{0.5\columnwidth}
    \includegraphics[width=0.86\textwidth]{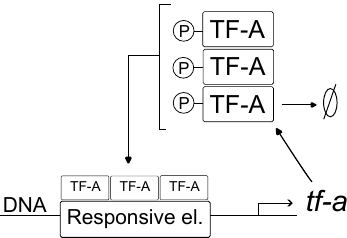}
  \end{minipage}\hfill
  \begin{minipage}[c]{0.5\columnwidth}
    \caption{Positive genetic autoregulating feedback loop \cite{smolen1998frequency}. The transcription factor TF-A, produced by gene \textit{tf-a}, forms an oligomer ($n=3$) that, when phosphorylated (P), increases the transcription rate of \textit{tf-a}, by promoting responsive DNA sequences. TF-A degrades at a constant rate.} \label{fig:scheme}
  \end{minipage}
\end{figure}

\section{Positive Autoregulating Feedback Model}

The positive autoregulating feedback loop model for gene regulation introduced in \cite{smolen1998frequency} is visualised in Fig.~\ref{fig:scheme}. The single activator TF-A belongs to a pathway mediating cellular responses to stimuli and regulates its own transcription \cite{knight1992}.
If the concentration $y \in [0,\infty)$ of TF-A is negligible, transcription of the \textit{tf-a} gene occurs at a basal rate $r_a>0$. TF-A degrades following first-order kinetics, at a constant rate $\zeta>0$.
Also, TF-A can form oligomers having concentration $y^n$, where $n \in \mathbb{N}$ represents how many monomers form the oligomer, which bind to responsive elements and, when phosphorylated, promote the transcription of \textit{tf-a}; phosphorylation can be further regulated by external signals. We assume that binding processes are relatively rapid and close to equilibrium, so that the resulting increase in the transcription rate is captured by a monotonically increasing Hill function, with Hill coefficient $n$ and dissociation constant $K>0$ of oligomers from the responsive elements, which saturates to a maximal rate $\alpha>0$.
All parameters are dimensionless, and the model does not explicitly consider the translation of mRNA into proteins.
In a deterministic setting, the resulting ordinary differential equation (ODE) is
$\tfrac{dy(\tau)}{d\tau} = \alpha \frac{y(\tau)^n}{K + y(\tau)^n} - \zeta y(\tau) + \tilde{r}$.
Considering the new variable $x \doteq y / K^{\frac{1}{n}} \in [0, \infty)$ and rescaling time as $t \doteq \tau/\zeta$ allows us to rewrite the ODE as 
\begin{equation}
    \frac{dx(t)}{dt} = f(x(t)) = a \frac{x(t)^n}{1 + x(t)^n} - x(t) + r,
    \label{eq:ODE_det}
\end{equation}
where $a=\alpha/(\zeta K^{\frac{1}{n}})>0$, $r=\tilde{r}/(\zeta K^{\frac{1}{n}})>0$, and the saturating Hill function $\frac{x^n}{1 + x^n}$, for $n \to \infty$, converges to the Heaviside step function $\Theta(x - 1)$. Model \eqref{eq:ODE_det} with $r=0$ has been thoroughly analysed in \cite{cacace2010discrete}.

We assess the system resilience through the lens of the rigorous formal definitions introduced in \cite{MTNS2024} for a family of ODE systems consisting of stochastic perturbations of a nominal deterministic system, aimed at probabilistically quantifying its ability to preserve a prescribed attractor $A$ (e.g., an asymptotically stable equilibrium) and the corresponding basin of attraction $B(A)$ in spite of noise.
Our \textit{system family} is $\mathcal{F}= \{G_\lambda\}_{\lambda\in \mathcal{I}}$, with
$\mathcal{I}=[0, \hat \lambda)$ and $\hat \lambda>0$, where \eqref{eq:ODE_det} is the \textit{nominal} system $G_{\lambda_0}$, with $\lambda_0 = 0$, whereas the generic system $G_\lambda \in \mathcal{F}$ is described by the stochastic differential equation (SDE)
\begin{equation}\label{eq:ODE_stoch}
\tfrac{dx(t)}{dt} = f(x(t)) +  \lambda \eta(t)  = \tfrac{a x(t)^n}{1 + x(t)^n} - x(t) + r + \lambda \eta(t),
\end{equation}
where $\eta(t)$ is uncorrelated white noise with mean $\langle \eta \rangle = 0$, variance $\sigma_\eta^2=1$
and intensity $\lambda \in \mathcal{I}$. For biological consistency, we assume the noise is such that positivity of the system is preserved: $x(t) \geq 0$ for all $t$. As observed in \cite{proverbio2022buffering}, oligomers that require cooperative binding of more TF-A monomers, leading to a larger Hill coefficient $n$, allow for a better suppression of noise propagation close to the system equilibria. However, no formal results are available for the resilience of system \eqref{eq:ODE_stoch} in relation to its asymptotically stable equilibria, which we investigate in this work.

\section{Bistability of the Deterministic Model}
\label{sec:det:mod}

\begin{figure*}[tb]
  \centering
	\includegraphics[width=0.9\textwidth]{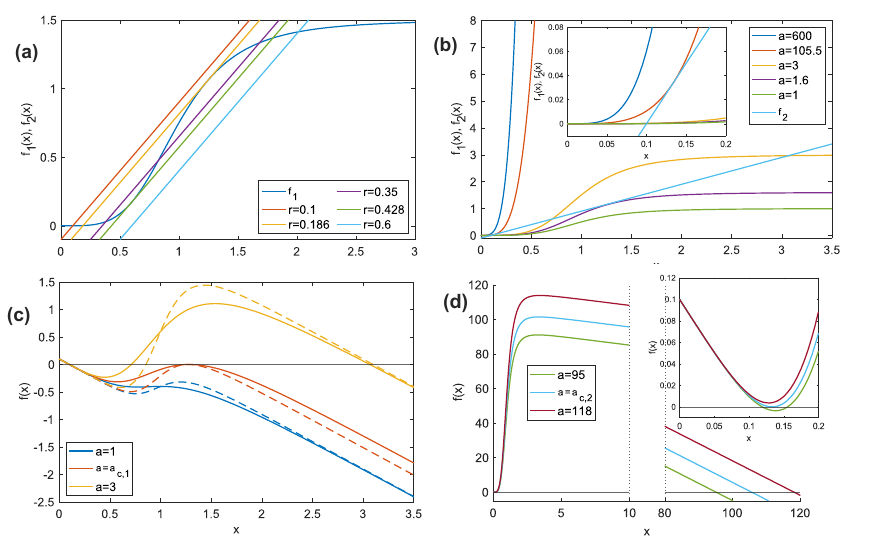}
  \caption{\textbf{(a, b)} The intersections of functions $f_1(x)=a x^n / (1 + x^n)$ and $f_2(x)=x-r$ can be either one, two or three, and correspond to the equilibria of system \eqref{eq:ODE_det}; $\tfrac{dx}{dt} = f(x)=f_1(x)-f_2(x)$ is positive when $f_1(x)>f_2(x)$, zero when $f_1(x)=f_2(x)$, and negative when $f_1(x)<f_2(x)$. \textbf{(a)} $f_1$ with $n=4$ and $a=1.5$; $f_2$ with $r \in \{0.1, 0.186, 0.35, 0.428, 0.6\}$. The intersections are two when $f_2$ is tangent to $f_1$, for $r=r_{c,1} \approx 0.186$ and $r=r_{c,2} \approx 0.428$, and three for $r_{c,1} < r < r_{c,2}$. The intersection is unique for $0 < r < r_{c,1}$ and $r > r_{c,2}$. \textbf{(b)} $f_1$ with $n=4$ and $a \in \{1, 1.6, 3, 105.5, 600\}$; $f_2$ with $r=0.1$. The intersections are two when $f_2$ is tangent to $f_1$, for $a=a_{c,1} \approx 1.6$ and $a=a_{c,2} \approx 105.5$, and three for $a_{c,1} < a < a_{c,2}$. The intersection is unique for $0 < a < a_{c,1}$ and $a > a_{c,2}$. \textbf{(c, d)} Zeros of $f(x)$ with $r=0.1$ for different values of $n$ and of $a$. \textbf{(c)} For $a=a_{c,1}(n)$, $f(x)$ is tangent to the $x$-axis at some point $\bar x(a_{c,1}(n))$: a fold bifurcation occurs that causes a transition from two to three equilibria. Solid lines: $n=4$, with $a_{c,1}(n) \approx 1.6$. Dashed lines: $n=7$, with $a_{c,1}(n) \approx 1.4$. \textbf{(d)} With $n=7$, for $a=a_{c,2} \approx 105.5$, $f(x)$ is tangent to the $x$-axis at some point $\bar x(a_{c,2}(n))$: a fold bifurcation occurs that causes a transition from three to two equilibria.}
\label{fig:intersections}
\end{figure*}

To analyse the qualitative behaviour of \eqref{eq:ODE_det}, we define functions $f_1(x) \doteq a \frac{x^n}{1 + x^n}$ and $f_2(x)\doteq x-r$, such that $f(x)=f_1(x)-f_2(x)$.
We consider $x \in [0,a+r]$, which is an invariant set for \eqref{eq:ODE_det}. In fact, \eqref{eq:ODE_det} is a positive system ($x(0)\geq 0$ implies $x(t) \geq 0$ for all $t>0$, since $\tfrac{dx}{dt} >0$ when $x=0$)  and $\tfrac{dx}{dt} < 0$ when $x \geq a+r$, because $\sup_{x\in[0,\infty)} f_1(x)=a$.

The equilibria of system \eqref{eq:ODE_det} are the intersections of the Hill function $f_1(x)$ and the line $f_2(x)$, as shown in Figs.~\ref{fig:intersections}a,b.

If $n=1$, $f_1$ is a Michaelis-Menten function, and this guarantees structural stability \cite{BLANCHINI2023110683}: the system admits a unique equilibrium $\bar x_1 = \frac{a+r-1+\sqrt{(a+r-1)^2+4r}}{2}$, which is structurally globally asymptotically stable, because $\tfrac{dx}{dt} >0$ for $0 \leq x < \bar x_1$ and $\tfrac{dx}{dt} <0$ for $x > \bar x_1$, for all possible choices of $a>0$ and $r>0$. We next consider $n \geq 2$. Figs.~\ref{fig:intersections}c,d show $f(x)$ for a fixed value of $r$ and different values of $n \geq 2$ and $a$; the system equilibria, corresponding to the zeros of $f(x)$, can be either one (globally asymptotically stable), or two (one asymptotically stable and one unstable), or three (two asymptotically stable and one unstable); their stability can be assessed based on the sign of $f(x)$ in the intervals delimited by the equilibria.


\begin{prop}\label{prop:equilibria_stability}
Given $a>0$, $r > 0$ and $n \geq 2$, system \eqref{eq:ODE_det} admits at most three equilibria $\bar x_i$. If the equilibrium $\bar x_1>0$ is unique, it is globally asymptotically stable. If there are two equilibria, with $0 < \bar x_1 < \bar x_2$, one is asymptotically stable, while the other is unstable. If there are three equilibria, with $0 < \bar x_1 < \bar x_2 < \bar x_3$, the equilibria $\bar x_1$ and $\bar x_3$ are asymptotically stable, while $\bar x_2$ is unstable.
$\hfill\square$
\end{prop}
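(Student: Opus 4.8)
The plan is to exploit the special geometry of $f=f_1-f_2$: since $f_2$ is affine, $f$ inherits the sigmoidal (convex-then-concave) shape of the Hill term $f_1$, and the whole statement then follows from a one-variable sign analysis of $f$ and its derivatives, combined with the scalar stability criterion that an equilibrium $\bar x$ of $\dot x=f(x)$ is asymptotically stable when $f'(\bar x)<0$ and unstable when $f'(\bar x)>0$.

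First I would establish the shape of $f$. Differentiating gives $f'(x)=anx^{n-1}/(1+x^n)^2-1$ and, after simplification, $f''(x)=an\,x^{n-2}\bigl[(n-1)-(n+1)x^n\bigr]/(1+x^n)^3$. For $n\ge2$ and $x>0$ the sign of $f''$ equals the sign of $(n-1)-(n+1)x^n$, which is positive for $x<x^\ast$ and negative for $x>x^\ast$, where $x^\ast=\bigl((n-1)/(n+1)\bigr)^{1/n}$. Hence $f''$ changes sign exactly once, so $f'$ is strictly increasing on $(0,x^\ast)$ and strictly decreasing on $(x^\ast,\infty)$, i.e.\ unimodal, with $f'(0^+)=-1$ and $f'(x)\to-1$ as $x\to\infty$. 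Consequently $f'$ has at most two zeros, and by Rolle's theorem $f$ has at most three zeros: this proves that there are at most three equilibria.

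Next I would pin down the boundary behaviour. Since $f(0)=r>0$ and $f(x)<0$ for $x\ge a+r$ (as already noted, $\sup f_1=a$), $f$ has at least one zero in the invariant interval $[0,a+r]$, and all equilibria lie there. Whenever the equilibrium $\bar x_1$ is unique, an intermediate-value argument forces $f>0$ on $[0,\bar x_1)$ and $f<0$ on $(\bar x_1,\infty)$ (a single zero with $f(0)>0$ and $f<0$ at infinity cannot change sign elsewhere), so every trajectory in $[0,\infty)$ converges to $\bar x_1$: it is globally asymptotically stable. For the multistable regime I would use that, by the unimodality of $f'$, the map $f$ is decreasing--increasing--decreasing with an interior local minimum at some $\xi_1$ and local maximum at some $\xi_2>\xi_1$; three zeros occur exactly when $f(\xi_1)<0<f(\xi_2)$, producing the sign pattern $+,-,+,-$ across the subintervals bounded by $\bar x_1<\bar x_2<\bar x_3$. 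Reading off the signs yields $f'(\bar x_1)<0$, $f'(\bar x_2)>0$, $f'(\bar x_3)<0$, whence $\bar x_1,\bar x_3$ are asymptotically stable and $\bar x_2$ is unstable.

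The remaining, and most delicate, case is that of exactly two equilibria, which (by the same count) occurs precisely at the fold configurations $f(\xi_1)=0$ or $f(\xi_2)=0$ visible in Fig.~\ref{fig:intersections}. Here one equilibrium is hyperbolic, lying on a transversal crossing where $f'\neq0$, and the sign of $f$ there makes it asymptotically stable, while the other coincides with the tangency point, so $f'=0$ and linearization is inconclusive. I expect this to be the main obstacle: classifying the tangent equilibrium requires a direct argument on the sign of $f$ on its two sides rather than the eigenvalue test. Since $f$ touches zero at a local extremum, it keeps one sign on both sides, so the tangent equilibrium is semistable (attracting from one side, repelling from the other) and hence not asymptotically stable; in the dichotomy of the statement it plays the role of the unstable one, which completes the classification.
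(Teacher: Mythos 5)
Your proof is correct, but it reaches the conclusion by a genuinely different route than the paper. The paper counts equilibria algebraically: it clears denominators to get the polynomial $p(x) = -x^{n+1} + x^n(a+r) - x + r$, applies Descartes' rule of signs (three sign changes, hence one or three positive roots counted with multiplicity), and then classifies stability by asserting the sign of $f$ on each interval between equilibria, with the case structure organised around the tangency conditions \eqref{eq:equilibriumtangent} and illustrated by Fig.~\ref{fig:intersections}. You instead derive the count analytically: the sign of $f''$ is that of $(n-1)-(n+1)x^n$, so $f'$ is unimodal with $f'(0^+)=f'(\infty)=-1$, has at most two zeros, and Rolle bounds the zeros of $f$ by three; the resulting decreasing--increasing--decreasing shape of $f$ then \emph{proves} the sign patterns $+,-$ or $+,-,+,-$ that the paper essentially reads off the figures, and it cleanly isolates the two-equilibria case as the tangency $f(\xi_1)=0$ or $f(\xi_2)=0$, where your semistability argument correctly shows the tangent equilibrium is not asymptotically stable (hence ``unstable'' in the statement's dichotomy) while the transversal crossing is asymptotically stable. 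The trade-off: Descartes' rule is shorter and immediately gives the multiplicity accounting, while your curvature argument costs one more derivative but makes the interval-by-interval sign analysis -- and therefore the entire stability classification, including the delicate fold case -- self-contained and figure-free. One presentational nit: in the three-equilibria case the sign pattern of $f$ alone already gives stability for a scalar ODE; if you want the strict inequalities $f'(\bar x_i)\neq 0$ you should add that the exactly two zeros of $f'$ are forced by Rolle to lie strictly between consecutive zeros of $f$, so no equilibrium is a critical point of $f$.
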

\begin{proof}
Computing the equilibrium values $\bar x_i$ for which $f(\bar x_i)=0$ requires finding the roots of the polynomial $p(x) = -x^{n+1} + x^n(a+r)-x+r$, which in general is not possible analytically. In view of Descartes' rule of signs, since there are three sign changes between consecutive nonzero coefficients, the polynomial has either one or three positive roots, counted with their multiplicity. This corresponds to three alternative scenarios: one equilibrium, two equilibria (of which one is associated with two coincident roots of $p(x)$), or three equilibria.
As exemplified in Figs.~\ref{fig:intersections}a,b, showing the intersections of $f_1$ and $f_2$ for various parameter choices, all the scenarios are possible. Once all other parameters are fixed, the values $r_{c,1} < r_{c,2}$ of $r$ (or equivalently the values $a_{c,1} < a_{c,2}$ of $a$) for which $f_2$ is tangent to $f_1$ can be found by solving the nonlinear system
\begin{equation}\label{eq:equilibriumtangent}
\begin{cases}
f_1(x)=f_2(x)\\
\tfrac{df_1}{dx}(x)=\tfrac{df_2}{dx}(x)
\end{cases}    
\end{equation}
Fig.~\ref{fig:intersections}a shows $f_1$ for given $n$ and $a$, and $f_2$ for varying values of $r$. 
If $0 < r < r_{c,1}$ or $r > r_{c,2}$, there is a single equilibrium $\bar x_1>0$, which is globally asymptotically stable because $\tfrac{dx}{dt} >0$ for $0 \leq x < \bar x_1$ and $\tfrac{dx}{dt} <0$ for $x > \bar x_1$. If $r=r_{c,1}$, there are two equilibria $0 < \bar x_1 < \bar x_2$; $\bar x_1$ is asymptotically stable with basin of attraction $[0,\bar x_2)$ and $\bar x_2$ is unstable, since $\tfrac{dx}{dt} >0$ for $0 \leq x < \bar x_1$, while $\tfrac{dx}{dt} <0$ for $\bar x_1 < x < \bar x_2$ and $x > \bar x_2$. If $r_{c,1} < r < r_{c,2}$, there are three equilibria $0 < \bar x_1 < \bar x_2 < \bar x_3$, where $\bar x_1$ and $\bar x_3$ are asymptotically stable, with basins of attraction $[0,\bar x_2)$ and $(\bar x_2, \infty)$ respectively, while $\bar x_2$ is unstable; in fact, $\tfrac{dx}{dt} >0$ for $0 \leq x < \bar x_1$, $\tfrac{dx}{dt} <0$ for $\bar x_1 < x < \bar x_2$, 
$\tfrac{dx}{dt} >0$ for $\bar x_2 < x < \bar x_3$, and $\tfrac{dx}{dt} <0$ for $x > \bar x_3$. If $r=r_{c,2}$, there are two equilibria $0 < \bar x_1 < \bar x_2$, where $\bar x_1$ is unstable and $\bar x_2$ is asymptotically stable with basin of attraction $(\bar x_1, \infty)$, since $\tfrac{dx}{dt} >0$ for $0 \leq x < \bar x_1$ and $\bar x_1 < x < \bar x_2$, while $\tfrac{dx}{dt} <0$ for $x > \bar x_2$.
The same conclusions, with $r$ replaced by $a$, can be drawn when $r$ is fixed and $a$ varies, as in Fig.~\ref{fig:intersections}b.
\end{proof}

Proposition~\ref{prop:equilibria_stability} suggests that a fold bifurcation \cite{kuznetsov2013elements, strogatz2018nonlinear} occurs when the number of equilibria changes from two to three (one equilibrium splits into two) and from three to two (two equilibria collide and merge). We consider $a$ as the bifurcation parameter and replace $f(x)$ by $f(x,a)$.

\begin{prop}\label{prop:bif}
Given $r>0$ and $n\geq2$, the system $\tfrac{dx}{dt} = f(x,a)$ undergoes fold bifurcations at the critical points $(\bar x_2(a_{c,1}),a_{c,1})$ and $(\bar x_1(a_{c,2}),a_{c,2})$, provided that $a_{c,j} \neq \frac{4n}{(n^2-1)} \sqrt[n]{\frac{n-1}{n+1}}$. Then, the system can be mapped to the normal form $\tfrac{dz}{dt} = \beta \pm z^2$ around the critical points, where $z = x - \bar x_i$ and $\beta \in \mathbb{R}$ is proportional to $a-a_{c,j}$.
$\hfill\square$
\end{prop}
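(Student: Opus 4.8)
The plan is to check, at each of the two critical pairs, the four standard conditions for a generic fold (saddle--node) bifurcation of a scalar field and then invoke the classical normal-form reduction \cite{kuznetsov2013elements, strogatz2018nonlinear}. For $\tfrac{dx}{dt}=f(x,a)$ a fold occurs at $(\bar x,a_c)$ provided: (i) $f(\bar x,a_c)=0$; (ii) $\tfrac{\partial f}{\partial x}(\bar x,a_c)=0$ (zero eigenvalue); (iii) $\tfrac{\partial^2 f}{\partial x^2}(\bar x,a_c)\neq0$ (non-degeneracy); and (iv) $\tfrac{\partial f}{\partial a}(\bar x,a_c)\neq0$ (transversality). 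Once (i)--(iv) hold, a smooth coordinate change and Taylor expansion about $(\bar x,a_c)$ (the centre-manifold reduction being trivial here, as the system is scalar) yield the stated normal form $\tfrac{dz}{dt}=\beta\pm z^2$ with $z=x-\bar x$, the sign equal to that of $\tfrac{\partial^2 f}{\partial x^2}(\bar x,a_c)$ and $\beta$ proportional to $a-a_c$ with proportionality constant fixed by $\tfrac{\partial f}{\partial a}(\bar x,a_c)$.

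First I would dispose of (i), (ii) and (iv). By construction the two critical pairs solve the tangency system \eqref{eq:equilibriumtangent}: its first equation $f_1=f_2$ is exactly $f(\bar x,a_c)=0$, and its second equation $\tfrac{df_1}{dx}=\tfrac{df_2}{dx}=1$ is exactly $\tfrac{\partial f}{\partial x}(\bar x,a_c)=0$, since $\tfrac{\partial f}{\partial x}=a\,\tfrac{n x^{n-1}}{(1+x^n)^2}-1$. Transversality is immediate and unconditional, because $\tfrac{\partial f}{\partial a}=\tfrac{x^n}{1+x^n}>0$ at every positive equilibrium.

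The decisive step is the non-degeneracy condition (iii), which is where the excluded value originates. Differentiating once more,
\begin{equation}
\frac{\partial^2 f}{\partial x^2}=a\,\frac{n\,x^{n-2}\big[(n-1)-(n+1)x^n\big]}{(1+x^n)^3},
\end{equation}
so for $x>0$ its only zero is the inflection abscissa of the Hill function, $x^\ast=\sqrt[n]{\tfrac{n-1}{n+1}}$, at which $1+(x^\ast)^n=\tfrac{2n}{n+1}$. A fold point is therefore degenerate exactly when the tangency abscissa $\bar x$ equals $x^\ast$. Substituting $\bar x=x^\ast$ into the zero-eigenvalue relation $a\,\tfrac{n\bar x^{n-1}}{(1+\bar x^n)^2}=1$ and solving for $a$ gives $a=\tfrac{(1+(x^\ast)^n)^2}{n(x^\ast)^{n-1}}=\tfrac{4n}{n^2-1}\sqrt[n]{\tfrac{n-1}{n+1}}$, precisely the quantity excluded in the statement. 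Hence $a_{c,j}$ different from this value guarantees $\bar x\neq x^\ast$ and thus $\tfrac{\partial^2 f}{\partial x^2}(\bar x,a_{c,j})\neq0$, establishing (iii). Moreover $\tfrac{\partial f}{\partial x}=\tfrac{df_1}{dx}-1$ with $\tfrac{df_1}{dx}$ unimodal and peaked at $x^\ast$, so the doubled equilibrium appearing at $a_{c,1}$ satisfies $\bar x_2(a_{c,1})>x^\ast$ (giving $\tfrac{\partial^2 f}{\partial x^2}<0$, sign $-$), while the one collapsing at $a_{c,2}$ satisfies $\bar x_1(a_{c,2})<x^\ast$ (giving $\tfrac{\partial^2 f}{\partial x^2}>0$, sign $+$), consistent with the ordering of equilibria in Proposition~\ref{prop:equilibria_stability}.

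I expect the genuine obstacle to lie entirely in this last step: writing $\tfrac{\partial^2 f}{\partial x^2}$ in closed form, recognising that its unique positive root is the Hill inflection point, and algebraically matching the value of $a$ at which the fold abscissa reaches that inflection with the excluded expression --- which is exactly the cusp point where the two fold loci $a_{c,1}$ and $a_{c,2}$ merge and the quadratic normal form degenerates. With (i)--(iv) secured away from this point, the concluding application of the fold bifurcation theorem to obtain $\tfrac{dz}{dt}=\beta\pm z^2$, with $\beta\propto a-a_{c,j}$, is then routine.
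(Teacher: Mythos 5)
Your proposal is correct and follows essentially the same route as the paper: both verify the four fold conditions of \cite{kuznetsov2013elements}, note that the defining equations and tangency system give the equilibrium and zero-eigenvalue conditions for free, obtain transversality from the strict positivity of the equilibria, and show that non-degeneracy can only fail when $\bar x^n = \tfrac{n-1}{n+1}$, which via $f_x=0$ forces exactly the excluded value $a_{c,j} = \tfrac{4n}{n^2-1}\sqrt[n]{\tfrac{n-1}{n+1}}$. Your additional remarks on the sign of the normal form and the cusp interpretation go slightly beyond what the paper records, but the core argument is identical.
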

\begin{proof}
For fixed $n$ and $r$, consider the values $a_{c,1}$ and $a_{c,2}$ of $a$ that satisfy \eqref{eq:equilibriumtangent}.
As shown in the proof of Proposition~\ref{prop:equilibria_stability}, when $a=a_{c,1}$ (respectively, $a=a_{c,2}$) the system admits two equilibria, $0 < \bar x_1(a_{c,1}) < \bar x_2(a_{c,1})$ with $\bar x_1(a_{c,1})$ asymptotically stable and $\bar x_2(a_{c,1})$ unstable (respectively, $0 < \bar x_1(a_{c,2}) < \bar x_2(a_{c,2})$ with $\bar x_1(a_{c,2})$ unstable and $\bar x_2(a_{c,1})$ asymptotically stable). We show that a fold bifurcation occurs both at $(\bar x_2(a_{c,1}),a_{c,1})$ and at $(\bar x_1(a_{c,2}),a_{c,2})$. According to \cite[Theorems 3.1 and 3.2]{kuznetsov2013elements}, the conditions for a fold bifurcation are: (I) $f(\bar x_i, a_{c,j})=0$; (II) $f_x(\bar x_i, a_{c,j})=0$; (III) $f_a(\bar x_i, a_{c,j}) \neq 0$; (IV) $f_{xx}(\bar x_i, a_{c,j}) \neq 0$.
In our case, (I) and (II) are satisfied by construction, since they correspond to the two conditions in \eqref{eq:equilibriumtangent}. (III) is $\frac{x^n}{1 + x^n}\vert_{x=\bar{x}_i(a_{c,j})}\neq0$, which is true, because all equilibria are strictly positive. (IV) requires that $\bar x_i \neq 0$, which is true, and $n \bar x_i^n + \bar x_i^n +1-n \neq 0$. If the latter condition is violated, $\bar x_i^n = \frac{n-1}{n+1}$ and then condition (II), $f_x(\bar x_i, a_{c,j})=\frac{a_{c,j} n \bar x_i^{n-1}}{(1+\bar x_i^n)^2}-1 = 0$, yields $a_{c,j} = \frac{4n}{(n^2-1)} \sqrt[n]{\frac{n-1}{n+1}}$, which contradicts our assumption on $a_{c,j}$. Since all conditions are verified, the system can be mapped to the normal form $\tfrac{dz}{dt} = \beta \pm z^2$ of a fold bifurcation around the critical points \cite[Section 3.3]{kuznetsov2013elements}.
\end{proof} 

Fig.~\ref{fig:bif_diag} shows bifurcation diagrams $\bar{x}(a)$ for $\tfrac{dx}{dt} = f(x,a)$ with varying $n$. The system equilibria are computed symbolically for $n=2$, numerically for larger values of $n$. Their local stability is assessed through the linearisation.

The analysis of system \eqref{eq:ODE_det} has shown that, for $n \geq 2$, it can exhibit bistability (and thus switch between ``off'' and ``on'' states in response to external stimuli, see Section~\ref{sec:Intro}), and undergoes fold bifurcations for values $a_{c,1}$ and $a_{c,2}$ of the bifurcation parameter $a$ that, for fixed $r$, depend on $n$: $a_{c,1}(n)$ and $a_{c,2}(n)$. This is a so-called ``resilience profile'' \cite{Gao2016}. As shown in the proof of Proposition~\ref{prop:equilibria_stability}, when the equilibrium $\bar x_1>0$ is unique, it is asymptotically stable with basin of attraction $B(\{\bar x_1\})=[0,\infty)$; when there are two equilibria, the asymptotically stable one is $\bar x_1(a_{c,1})$, with $B(\{\bar x_1(a_{c,1})\})=[0, \bar x_2(a_{c,1}))$, when $a=a_{c,1}$ and $\bar x_2(a_{c,2})$, with $B(\{\bar x_2(a_{c,2})\})=(\bar x_1(a_{c,2}), \infty)$, when $a=a_{c,2}$; when there are three equilibria, $\bar x_1$ and $\bar x_3$ are asymptotically stable, with $B(\{\bar x_1\})=[0,\bar x_2)$ and $B(\{\bar x_3\})=(\bar x_2,\infty)$.

We now wish to quantify the system's ability to cope with stochastic disturbances.
When two equilibria are closer (\textit{e.g.}, in the bifurcation diagram in Fig.~\ref{fig:bif_diag}, $|\bar x_3(q_1)-\bar x_2(q_1)|$ decreases when $n$ increases and $|\bar x_1(q_2) - \bar x_2(q_2)|$ decreases when $n$ decreases, with $q_i=a-a_{c,i}$),
noise-induced switches \cite{ashwin2012tipping} are intuitively more likely to occur, thus inducing transitions from the original equilibrium to another \cite{krakovska2024resilience}. To assess the likelihood of such transitions in the presence of noise, we resort to the \textit{practical resilience} framework \cite{MTNS2024}. 

\begin{figure}[t]
	\centering
	\includegraphics[width=\linewidth]{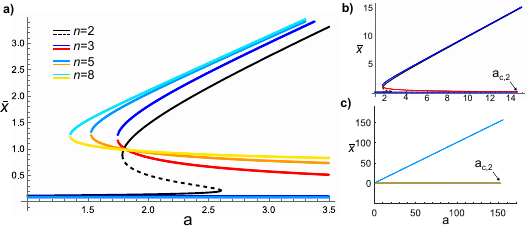}
	\caption{\textbf{(a)} Bifurcation diagram $\bar{x}(a)$ for $\tfrac{dx}{dt} = f(x,a)$, with $r=0.1$; $a_{c,1}$ is marked with the green star in each diagram, computed analytically for $n=2$ (solid and dashed black lines for equilibria $\bar{x}_1$ and $\bar{x}_3$, stable, and $\bar{x}_2$, unstable) and numerically for $n=3,5,8$ (lines in shades of blue and orange for equilibria $\bar{x}_1$ and $\bar{x}_3$, stable, and $\bar{x}_2$, unstable). For small values of $a$, decreasing $a$ enables a transition from the high to the low stable equilibrium; transitions from the low to the high stable equilibrium occur if $a$ increases, but when $n>2$ they require a huge increase in $a$, see (b,c). \textbf{(b, c)} Bifurcation diagrams for $n=3$ with $a_{c,2} \approx 15$ (b) and for $n=5$ with $a_{c,2} \approx 153$ (c). For $n=8$, $a_{c,2} > 250$ (not shown).}
	\label{fig:bif_diag}
\end{figure}

\section{Resilience of the Stochastic Model}
Given the system family $\mathcal{F}= \{G_\lambda\}_{\lambda \in \mathcal{I}}$ described by the parametric stochastic system \eqref{eq:ODE_stoch} with noise intensity $\lambda$,
consider an asymptotically stable equilibrium $A$ (attractor) of the \emph{nominal} deterministic system $G_{\lambda_0}$ given by \eqref{eq:ODE_det}, and the corresponding basin of attraction $B(A)$.
To quantify the probability that, despite the stochastic noise, the system trajectories emanating from the initial condition $x_0 \in B(A)$ remain within a prescribed neighbourhood of $A$ for the whole time interval $t \in [0, \tau)$, the concept of \textit{practical resilience} has been introduced in \cite[Definition 2]{MTNS2024}, formalising heuristic notions of resilience employed in systems biology \cite{Dai2015}.
\begin{definition}\label{def:res}
\cite{MTNS2024} Consider the system family $\mathcal{F} = \left\{G_{\lambda} \right\}_{\lambda \in \mathcal{I}}$ and let $(A,B(A))$ be an attractor-basin pair corresponding to $G_{\lambda_0}$. Fix the time horizon $\tau\in (0,\infty)$, distance $\delta \geq 0$ and confidence level $\gamma \in (0,1]$. Consider the set $A_\varepsilon = \{ \chi \colon \operatorname{dist}\left(\chi, A \right) \leq \varepsilon \}$, with $0 \leq \varepsilon \leq \delta$. The system $G_{\lambda}$ is $(\tau,\gamma,\delta, \varepsilon)$-\emph{practically resilient} if, for all $x_0\in A_\varepsilon \cap B(A)$,
\begin{equation}\label{eq:pract_res}
\mathbf{P}_{\lambda} \doteq \mathbb{P}_{\lambda}\left( \sup_{ t\in [0,\tau)} \operatorname{dist}\left({x}(t;x_0,\eta_{\lambda}), A \right) \leq \delta \right)\geq \gamma.
\end{equation}
The family $\mathcal{F}$ is $(\tau,\gamma,\delta,\varepsilon)$-\emph{practically resilient} if, for all $x_0\in A_\varepsilon \cap B(A)$,
$\inf_{\lambda\in \mathcal{I}\setminus \left\{\lambda_0 \right\}} \mathbf{P}_{\lambda} \geq \gamma$.
$\hfill\diamond$
\end{definition}
System $G_{\lambda}$, $\lambda\in \mathcal{I}\setminus \left\{\lambda_0 \right\}$, is $(\tau,\gamma,\delta, \varepsilon)$-practically resilient if, subject to the stochastic noise $\eta_{\lambda}$, the trajectory $x(t;x_0,\eta_{\lambda})$, emanating from an arbitrary point $x_0$ that lies within an $\varepsilon$-distance from the attractor $A$, and within its basin of attraction $B(A)$, remains within a $\delta$-distance from $A$ with probability at least $\gamma$ over the interval $t\in [0,\tau)$ \cite{MTNS2024}.
Due to the stochastic noise in the perturbed system, the system trajectories emanating from $B(A)$ may not converge to $A$ (which is consistent with persistent noise-induced fluctuations observed in experiments), but the definition requires them to remain within a neighbourhood of the attractor $A$ of the nominal system, with high enough probability, for all $t \in [0, \tau)$. 
We can study how the probability of preserving a prescribed attractor-basin pair depends on the system parameters; by fixing $\gamma$, we can identify the system parameters that yield a desired confidence level.

\textbf{Numerical simulations.} To quantify $\mathbf{P}_\lambda$ numerically for an attractor $\bar x_i$, \textit{i.e.}, an asymptotically stable equilibrium of \eqref{eq:ODE_det}, we generate 200 random trajectories $x(t;x_0,\eta_\lambda)$ with $x_0 \in B\left(\left\{\bar x_i\right\}\right)$, and assess how frequently they remain within $(\bar x_i - \delta, \bar x_i+\delta)$ for the whole horizon $[0, \tau)$. We choose the value $\tau=100$ such that the trajectories of the deterministic system \eqref{eq:ODE_det} have already reached their steady state.
In particular, we perform numerical simulations of system \eqref{eq:ODE_stoch} using an Euler-Maruyama scheme with $\Delta t=0.01$, using $r=0.1$ and $a(n)=a_{c,1}(n)+0.05$, which guarantees bistability for all the considered choices of $n \in \{2,3,5,8\}$ (\textit{cf.} Fig.~\ref{fig:bif_diag}), and with uniformly spaced initial conditions $x_0 \in B(\{\bar x_3(n)\})=(\bar x_2(n), r+a(n)]$, where $r+a(n)$ is the upper limit of the invariant set for system \eqref{eq:ODE_det} identified in Section~\ref{sec:det:mod}. We consider values of the bifurcation parameter $a$ close to $a_{c,1}$ because the transition from ``on'' to ``off'' states is particularly interesting in many biological systems, such as in the case of $\beta$-\textit{galactosidase} expression \cite{ozbudak2004multistability, proverbio2022buffering}. 

For the various choices of $n$ and for a range of noise intensities $\lambda$, we compute the fraction of trajectories of \eqref{eq:ODE_stoch} that remain within $(\bar x_3(n) - \delta, \bar x_3(n)+\delta)$ for the whole horizon $[0,\tau)$.
We set $\delta = 0.25$, so that $|\bar{x}_3(n) - \bar{x}_2(n)| < \delta$ and $|r+a(n) - \bar{x}_3(n)| < \delta$ for all considered values of $n$. 
Fig.~\ref{fig:sims}a shows $\mathbf{P}_\lambda$ depending on $n$ and $\lambda$: the probability of remaining within a $\delta$-neighbourhood of the attractor is larger if the noise intensity $\lambda$ is smaller (as expected). If $n$ increases, $\mathbf{P}_\lambda$ first increases and then decreases: more complex oligomers (formed by a larger number $n$ of monomers) initially increase the resilience to noise, but then bring the system to lower levels of resilience (linked to the small peaks of $\hat{p}(x)$ for $\bar{x}_3(n)$, which will be shown in Fig.~\ref{fig:p_s}). The non-monotonic behaviour of the resilience with $n$ suggests that we can identify a value of $n$ for which the system achieves maximal resilience; this is a promising avenue for future studies in synthetic biology.

For a fixed $\lambda=0.01$, Fig.~\ref{fig:sims}b shows $\mathbf{P}_{\lambda}$ for different values of $\delta \in (0,0.6]$, where a smaller $\delta$ corresponds to a stricter requirement. For smaller values of $n$, $\mathbf{P}_\lambda$ is higher for larger $\delta$; conversely, given a fixed probability $\gamma^*$, the smallest $\delta$-neighbourhood of $\bar x_3(n)$ for which $\mathbf{P}_{\lambda}$ is at least $\gamma^*$ is smaller when $n$ is larger (more complex oligomers), which is favourable to reduce variability in the concentration of TFs, thus confirming the qualitative observations in \cite{proverbio2022buffering}.

\begin{figure}[htb]
	\centering
	\includegraphics[width=0.7\linewidth]{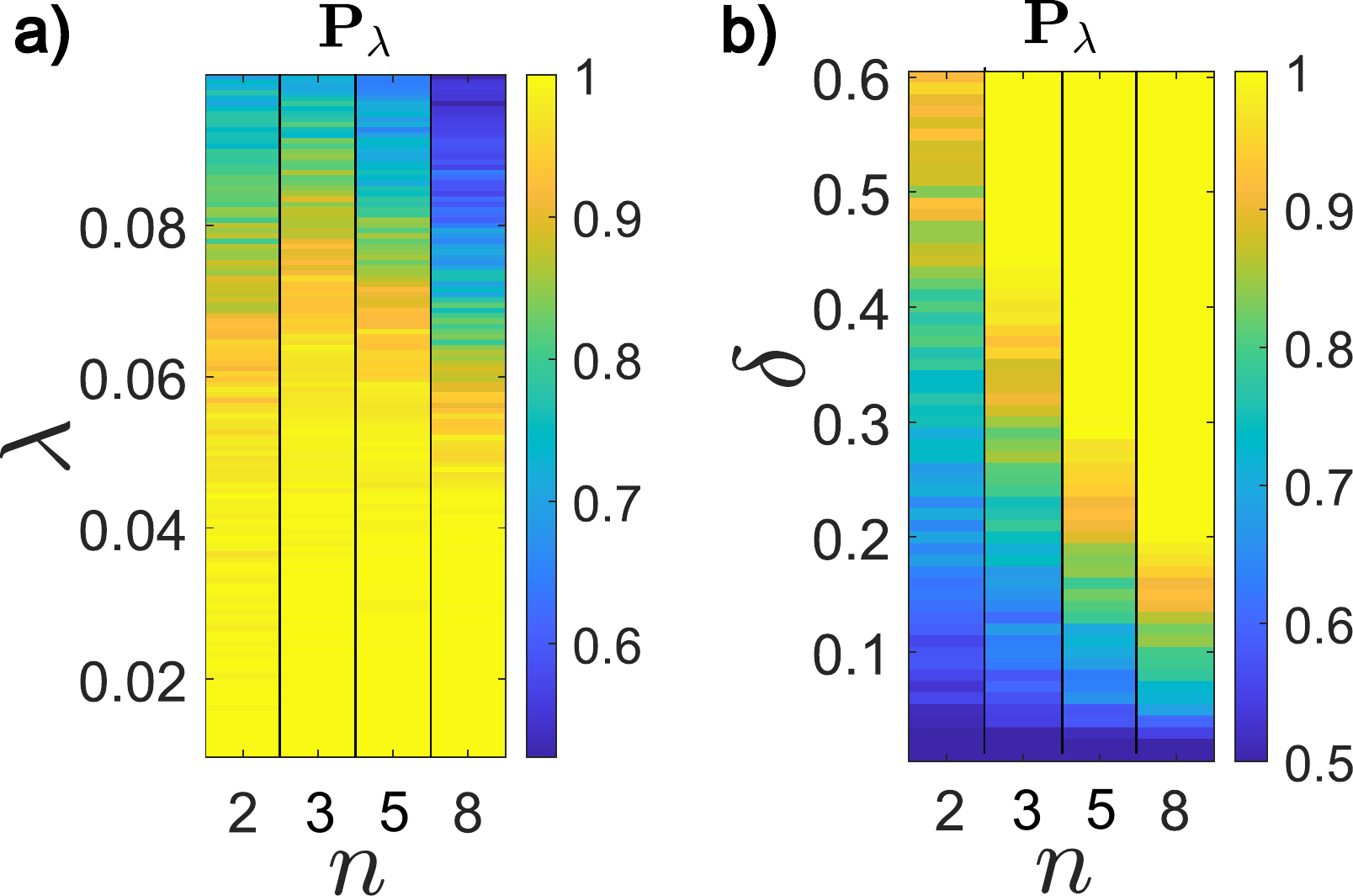}
	\caption{\textbf{(a)} Dependence of $\mathbf{P}_\lambda$ on $\lambda$ and $n$, for system \eqref{eq:ODE_stoch} with $r=0.1$, $a=a_{c,1}(n)+0.05$ and a fixed $\delta = 0.25$. \textbf{(b)} Dependence of $\mathbf{P}_\lambda$ on $\delta$ and $n$, for system \eqref{eq:ODE_stoch} with $r=0.1$, $a=a_{c,1}(n)+0.05$ and a fixed $\lambda = 0.01$.}
	\label{fig:sims}
\end{figure}

\section{Fokker-Planck Quantification of Resilience}
System \eqref{eq:ODE_stoch} describes the evolution of a stochastic process, driven by white noise, having probability density function (PDF) $p(x,t)$.
Here, we offer a complementary definition of resilience of attractors, based on properties of the probability distribution of the stochastic process. In particular, to assess the probability that a trajectory of \eqref{eq:ODE_stoch}, computed with the It\^{o} formalism, lies in a prescribed set at a given time, we consider the corresponding Fokker-Planck equation (FPE), which describes the evolution of $p(x,t)$. In fact, any stochastic process whose PDF $p(x,t)$ satisfies the FPE associated with the SDE \eqref{eq:ODE_stoch} is equivalent to the It\^{o} solution of  the SDE \cite{Gardiner1985}. Following \cite{Gardiner1985}, a generic SDE in the Langevin equation form $dx = a(x,t) dt + \sqrt{b(x,t)} \tilde \eta(t)dt$,
where the integral of the stochastic noise $\tilde \eta$ is a Wiener process, has an associated FPE for $p(x,t)$:
\begin{equation*}
    p_t(x,t) = - [a(x,t)p(x,t)]_x+\tfrac{[b(x,t)p(x,t)]_{xx} }{2}= -J_x(x,t),
\end{equation*}
where we have introduced the probability current
\begin{equation}\label{eq:J}
J(x,t) = a(x,t)p(x,t) -\tfrac{1}{2}[b(x,t)p(x,t)]_x
\end{equation}
and the subscript $[\cdot]_z$ denotes partial differentiation in $z$.
The FPE associated with \eqref{eq:ODE_stoch} in the It\^{o} formalism is  \begin{equation}\label{eq:FPE_loop}
    p_t(x,t) = - \left[f(x)p(x,t) \right]_x + \frac{\lambda^2}{2} p_{xx}(x,t),
\end{equation}
where $a(x,t)=f(x)$ and $\sqrt{b(x,t)} = \lambda$ do not depend on time.
We set reflecting boundary conditions for $J(x,t)$ at $x=0$ and $x=+\infty$: $J(0,t) = J(+\infty,t) = 0$, for all $t$.

The following resilience definition considers the probability that a realisation of the perturbed system trajectory with initial condition $x_0$ lies in a $\delta$-neighbourhood of the nominal attractor $A$ at time $t$, for all times $t > T$, with $T>0$.

\begin{definition}\label{def:resFP}
Consider the system family $\mathcal{F} = \left\{G_{\lambda} \right\}_{\lambda \in \mathcal{I}}$ and let $(A,B(A))$ be an attractor-basin pair corresponding to $G_{\lambda_0}$. Fix the distance $\delta \geq 0$ and confidence level $\gamma \in (0,1]$, and define the event $\mathfrak{R}_{t,\delta} \doteq \{x(t;x_0,\eta_{\lambda}) \in A_\delta \}$. The attractor $A$ is $(\gamma,\delta)$-\emph{resilient} for the system $G_{\lambda}$ if there exists $T \in (0,\infty)$ such that for all $t > T$, for all $x_0\in B(A)$,
\begin{equation}\label{eq:resFP}
\mathbb{P}_\lambda(\mathfrak{R}_{t,\delta}) = \int_{A_\delta} p(x,t) dx \geq \gamma.
\end{equation}
The attractor $A$ is $(\gamma,\delta)$-\emph{resilient} for the family $\mathcal{F}$ if, for all $x_0\in B(A)$,
$\inf_{\lambda\in \mathcal{I}\setminus \left\{\lambda_0 \right\}} \mathbb{P}_\lambda(\mathfrak{R}_{t,\delta}) \geq \gamma$. $\hfill\diamond$
\end{definition}

For system \eqref{eq:ODE_stoch}, the probability of $x(t;x_0,\eta_{\lambda})$ lying in a $\delta$-neighbourhood of the attractor $\bar x_i$ (at time $t$) is $\mathbb{P}_\lambda(\mathfrak{R}_{t,\delta}) = \int_{\bar x_i-\delta}^{\bar x_i+\delta} p(x,t) dx$.
In the limit for large enough times, such a probability can be approximated by
\begin{equation}
\label{eq:cum_prob}
    \mathcal{P}_\lambda \doteq \int_{\bar x_i-\delta}^{\bar x_i+\delta} \hat{p}(\xi) d\xi,
\end{equation}
where $\hat{p}(x)$ is the time-independent \textit{stationary solution} of \eqref{eq:FPE_loop} and, under suitable assumptions, is the limit of the solutions $p(x,t)$ for $t\to\infty$ in the $L^1$ metric
(see, \textit{e.g.}, \cite[Sections 5.4 and 6.1]{Risken1984} and \cite{CarrilloToscani1998} for more details). 
Hence, $\hat{p}(x)$ captures the asymptotic PDF of the stochastic process $x$ solving \eqref{eq:ODE_stoch} in the It\^{o} sense.

Following \cite[Chapter 5.2.2]{Gardiner1985}, to obtain $\hat{p}(x)$, we set $p_t(x,t)=0$ in \eqref{eq:FPE_loop}. Since $J_x(x,t)=J_x(x)=0$ and the boundary conditions are zero, this amounts to setting $J = 0$ in \eqref{eq:J}, with $a(x,t)=f(x)$ and $\sqrt{b(x,t)} = \lambda$, namely
\begin{equation}\label{eq:Psequation}
    f(x)\hat{p}(x) = \tfrac{\lambda^2}{2} \tfrac{d\hat{p}}{dx}(x)=0,
\end{equation}

which yields
\begin{equation}\label{eq:Ps}
    \hat{p}(x) = \tfrac{\nu}{\lambda^2} \exp\left[ \tfrac{2}{\lambda^2}\int_{0}^x f(\xi)d\xi \right],
\end{equation}
where $\nu$ is a normalization constant s.t. $\int_{0}^{\infty} \hat{p}(x) dx = 1$.

Since $f(x)$ is integrable, $\hat{p}(x)$ can be rewritten as
\begin{equation}
\label{eq:ps_explicit}
     \hat{p}(x) = \tfrac{\nu}{\lambda^2} \exp \left[\tfrac{1}{\lambda^2}\left(2 r x - {x^2} + \tfrac{2 a x^{n + 1} \, \mathfrak{F}(x;n)}{n + 1}\right) \right] \ ,
\end{equation}
where $\mathfrak{F}(x;n)={_2F_1}\left[1, \frac{n+1}{n}; \frac{2n+1}{n}; -x^n\right]$ is the Gaussian hypergeometric function \cite[Chapter 15]{abramowitz1965handbook}.
Therefore, $\hat{p}(x)$ depends explicitly on $n$, $r$ and $a$. We study how it depends on the bifurcation parameter $a$ considered in Proposition~\ref{prop:bif}, also in relation to the equilibria of the nominal deterministic system analysed in Proposition~\ref{prop:equilibria_stability}.

\begin{prop}\label{prop:PDF}
For $n\geq 2$, $\hat{p}(x)$ has two local maxima when $a_{c,1}(n) < a < a_{c,2}(n)$ and a single maximum when $0 < a \leq a_{c,1}(n)$ or $a \geq a_{c,2}(n)$. Moreover, its maxima are achieved for values of $x$ corresponding to the asymptotically stable equilibria of the nominal deterministic system \eqref{eq:ODE_det}. 
$\hfill\square$
\end{prop}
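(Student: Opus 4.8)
The plan is to exploit the fact that the stationary density \eqref{eq:Ps} is a strictly positive exponential, so that its critical points coincide exactly with the zeros of $f$, i.e., the equilibria of the nominal system \eqref{eq:ODE_det}. Differentiating \eqref{eq:Ps} gives
\begin{equation*}
\frac{d\hat{p}}{dx}(x) = \frac{2}{\lambda^2}\, f(x)\, \hat{p}(x),
\end{equation*}
and since $\hat{p}(x) = \frac{\nu}{\lambda^2}\exp[\,\cdot\,] > 0$ for every $x \in [0,\infty)$, the sign of $\hat{p}'(x)$ equals the sign of $f(x)$, and $\hat{p}'(x)=0$ precisely when $f(x)=0$. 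Hence the stationary points of $\hat{p}$ are exactly the equilibria $\bar{x}_i$ of \eqref{eq:ODE_det}, which already proves the location claim; it then remains to classify these critical points and to count the maxima in each regime of $a$.

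First I would apply the first-derivative test by reading off the sign of $f$ from the proof of Proposition~\ref{prop:equilibria_stability}. At an asymptotically stable equilibrium $f$ changes sign from positive to negative, so $\hat{p}'$ does likewise and $\bar{x}_i$ is a local maximum; at an unstable equilibrium $f$ changes from negative to positive, so $\bar{x}_i$ is a local minimum. This settles the strictly bistable regime at once: when $a_{c,1}(n) < a < a_{c,2}(n)$ there are three equilibria, with $\bar{x}_1,\bar{x}_3$ asymptotically stable and $\bar{x}_2$ unstable, so $\hat{p}$ has exactly two local maxima (at $\bar{x}_1,\bar{x}_3$) separated by one local minimum (at $\bar{x}_2$). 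Likewise, when $0 < a < a_{c,1}(n)$ or $a > a_{c,2}(n)$ the unique equilibrium $\bar{x}_1$ is globally asymptotically stable, yielding a single maximum.

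The delicate cases are the boundary values $a = a_{c,1}(n)$ and $a = a_{c,2}(n)$, which is exactly why the statement uses $\leq$ and $\geq$. There $f_2$ is tangent to $f_1$ by construction of \eqref{eq:equilibriumtangent}, so $f$ has a double root at the corresponding equilibrium ($f$ and $f_x$ both vanish) and does not change sign across it. The sign analysis in the proof of Proposition~\ref{prop:equilibria_stability} shows $f<0$ on both sides of $\bar{x}_2$ when $a=a_{c,1}(n)$, and $f>0$ on both sides of $\bar{x}_1$ when $a=a_{c,2}(n)$. Consequently $\hat{p}'$ keeps constant sign across that degenerate point, making it a horizontal inflection of $\hat{p}$ rather than a local extremum, so only the remaining asymptotically stable equilibrium contributes a maximum, confirming the single maximum in both boundary cases.

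The main obstacle I expect is precisely this treatment of the degenerate fold points: one must argue that a saddle-node tangency yields a non-extremal critical point of $\hat{p}$. This follows from the double-root structure, and can be reinforced via condition (IV) of Proposition~\ref{prop:bif}, $f_{xx}(\bar{x}_i,a_{c,j})\neq 0$, which guarantees a genuinely quadratic tangency so that $f$ touches zero from one side only. As a minor closing check, no maximum can hide at the boundary: since $f(0)=r>0$, $\hat{p}$ is strictly increasing at $x=0$, so $x=0$ is never a local maximum, while for $x\to\infty$ the exponent $\frac{2}{\lambda^2}\int_0^x f\,d\xi \to -\infty$ (because $f(x)\sim -x$), so $\hat{p}$ decays to zero; hence the enumeration of interior maxima is complete.
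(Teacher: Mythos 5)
Your proposal is correct and follows essentially the same route as the paper's proof: identify the critical points of $\hat{p}$ with the zeros of $f$ via $\hat{p}'(x) = \tfrac{2}{\lambda^2} f(x)\hat{p}(x) $ and the positivity of the exponential, then classify them by the sign of $f$ on the intervals between equilibria, concluding that asymptotically stable equilibria give maxima, the interior unstable equilibrium gives a minimum, and the tangency (double-root) equilibria at $a=a_{c,1}(n)$ and $a=a_{c,2}(n)$ give inflection points. Your additional checks --- the quadratic-tangency argument via $f_{xx}\neq 0$ and the verification that no maximum occurs at $x=0$ or as $x\to\infty$ --- are sound refinements that the paper leaves implicit.
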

\begin{proof}
Denoting by $h(x)$ the argument of the exponential in \eqref{eq:Ps} and substituting the expression of $\hat{p}(x)$ into \eqref{eq:Psequation} yields
$\frac{d\hat{p}}{dx}(x) = \frac{2}{\lambda^2} f(x) \frac{\nu}{\lambda^2} \exp[h(x)]=0$.
Since the exponential function is always positive, this is equivalent to requiring $f(x)=0$, which is the equilibrium condition for system \eqref{eq:ODE_det}. Hence, the statement follows from the results in Section~\ref{sec:det:mod}. The stationary points of $\hat{p}(x)$ are attained for values of $x$ that are equilibria of \eqref{eq:ODE_det}. Considering the sign of $f(x)$ within the intervals of values of $x$ delimited by the zeros of $f(x)$, we have a maximum for $\hat{p}(\bar x_i)$ if $\bar x_i$ is asymptotically stable, a minimum for $\hat{p}(\bar x_i)$ if $\bar x_i$ is the unstable equilibrium in the three-equilibria case, and an inflection point for $\hat{p}(\bar x_i)$ if $\bar x_i$ is the unstable equilibrium in the two-equilibria case. Hence, $\hat{p}(x)$ has two local maxima in the three-equilibrium case and a single maximum otherwise.
\end{proof}

In view of Proposition~\ref{prop:PDF}, $\hat{p}(x)$ is a good indicator of the system resilience in relation to the preservation of prescribed attractors: its maxima occur \textit{at} the attractors, and the maxima are two when the deterministic system is bistable, because noise can drive the stochastic trajectories to a different basin of attraction, with a probability that increases with the noise intensity $\lambda$. 

Fig.~\ref{fig:p_s} shows $\hat{p}(x)$ as in \eqref{eq:ps_explicit} for different values of ${n}$, for increasing noise intensity $\lambda$, for three different values of $a$. For $a=0.8$ and $a=2.7$, system \eqref{eq:ODE_det} is monostable for all considered $n$ and its equilibrium is at low values (``off'') and high values (``on''), respectively; for $a = a_{c,1}(n) + 0.05$, the system is bistable. When $\lambda^2=0.01$, in the monostable cases, all the noisy trajectories lie very close to the attractor of the nominal deterministic system: $\hat{p}(x)$ is  an approximation of a Dirac delta centred at the attractor when $a=2.7$, and is centred at the attractor with a very narrow peak when $a=0.8$; in the bistable case ($a = a_{c,1}(n) + 0.05$), $\hat{p}(x)$ has a maximum at the low asymptotically stable equilibrium, a minimum at the unstable equilibrium and a second maximum at the high asymptotically stable equilibrium. When $\lambda^2=0.05$, the trajectories are more likely to lie within regions of the state space further away from the attractors. For all noise intensities, when $n$ is larger, $\hat{p}(x)$ has higher and narrower peaks centred at the equilibria, thus confirming the observation in \cite{proverbio2022buffering} that noise suppression close to the equilibria improves when using more complex oligomers (composed of a larger number $n$ of monomers).

\begin{figure}[t]
	\centering
	\includegraphics[width=0.95\linewidth]{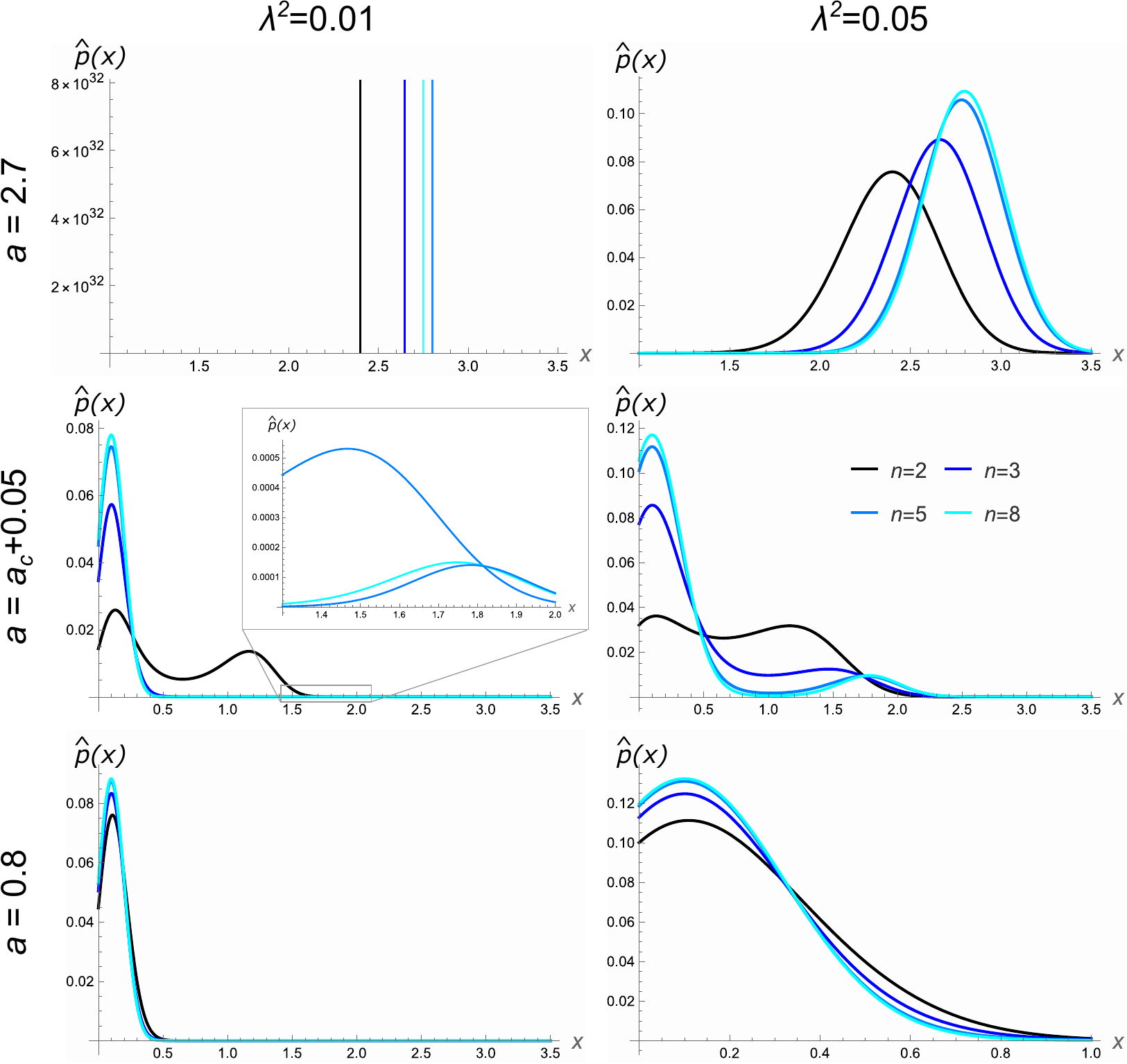}
	\caption{Plots of $\hat{p}(x)$ from \eqref{eq:ps_explicit}, for $r=0.1$ and $n \in \{2,3,5,8\}$. Columns: $\lambda^2 = 0.01$ (left) and $\lambda^2 = 0.05$ (right). Rows: $a=2.7$ (top); $a = a_{c,1}(n) + 0.05$ (middle); $a = 0.8$ (bottom). The constant $\nu$ is estimated numerically for normalization. The inset on the second row for $\lambda^2=0.01$ shows the second maximum for n>2.}
	\label{fig:p_s}
\end{figure}


\section{Conclusion}

We have investigated the resilience of the positive gene autoregulating feedback loop motif, where feedback is enforced by oligomers formed by $n$ monomers. The deterministic system can exhibit both monostable and bistable behaviours depending on the parameter values, and undergoes fold bifurcations when crucial parameters are varied. First, we have assessed \textit{practical resilience} \cite{MTNS2024} of the system in the presence of noise, and analysed how different parameters affect the system's resilience, focusing in particular on the oligomer size $n$, for which there seems to exist an optimal value that maximises resilience.
Then, we have introduced a complementary definition of resilience in a stochastic framework that relies on the Fokker-Planck equation, thus bridging bistability in the deterministic case with a bimodal stationary distribution in the stochastic case.
Modelling biological stochasticity as white noise is a common first approximation \cite{o2018stochasticity} to account for various sources of uncertainty in transcription; more complex models, such as stochastic chemical kinetics models, could be considered to mechanistically capture randomness in transport and binding of TFs, as well as transcriptional noise. 

\section*{Acknowledgments}
We are grateful to Rami Katz for precious help and discussions on the topic of this manuscript, and to the anonymous reviewers for their valuable comments.

This work was funded by the European Union through the ERC INSPIRE grant (project number 101076926). Views and opinions expressed are however those of the authors only and do not necessarily reflect those of the European Union or the European Research Council Executive Agency. Neither the European Union nor the European Research Council Executive Agency can be held responsible for them.

\section*{Code}
The code to reproduce the results is at\\ \url{https://github.com/daniele-proverbio/GeneAutocatalysis}.

\bibliography{sample.bib}

\end{document}